\DeclareMathAlphabet{\pazocal}{OMS}{zplm}{m}{n}
\newtheorem{theorem}{Theorem}[section]
\numberwithin{equation}{section}
\newcommand{\II}{{\mathbb I}}
\newcommand{\CC}{{\mathbb C}}
\newcommand{\RR}{{\mathbb R}}
\newcommand{\NN}{{\mathbb N}}
\newcommand{\TT}{{\mathbb T}}
\newcommand{\Ac}{{\mathcal{A}}}
\newcommand{\Cc}{{\mathcal{C}}}
\newcommand{\Dc}{{\mathcal{D}}}
\newcommand{\Fc}{{\mathcal{F}}}
\newcommand{\Gc}{{\mathcal{G}}}
\newcommand{\bfi}{{\bm f}}
\newcommand{\bgi}{{\bm g}}
\newcommand{\bxi}{{\bm x}}
\newcommand{\byi}{{\bm y}}
\newcommand{\bPi}{{\bm P}}
\newcommand{\bQi}{{\bm Q}}
\newcommand{\Alg}{\pazocal{A}}                   
\newcommand{\BHil}[1]{\pazocal{B}(\pazocal{H}_#1)}   
\newcommand{\Dom}{\pazocal{D}}                   
\newcommand{\Group}{\pazocal{G}}
\newcommand{\Hil}{\pazocal{H}}                   
\newcommand{\Lag}{\pazocal{L}}              
\newcommand{\Weyl}{\pazocal{W}}
\newcommand{\Ad}[1]{\text{Ad} \, #1}
\def\eg{{\it e.g.\ }}
\def\ie{{\it i.e.\ }}
\def\etc{{\it etc}}
\begin{document} 

%

\title{ Classical dynamics, arrow of time, and genesis \\ of the
Heisenberg commutation relations  \\[5mm]
\Large In memory of Richard V.\ Kadison}
\author{\large Detlev Buchholz${}^{(1)}$ \ and 
\ Klaus Fredenhagen${}^{(2)}$ \\[5mm]
\small 
${}^{(1)}$ Mathematisches Institut, Universit\"at G\"ottingen, \\
\small Bunsenstr.\ 3-5, 37073 G\"ottingen, Germany\\[5pt]
\small
${}^{(2)}$ 
II. Institut für Theoretische Physik, Universit\"at Hamburg \\
\small Luruper Chaussee 149, 22761 Hamburg, Germany \\
}
\date{}

\maketitle

{\small 
\noindent {\bf Abstract.}
Based on the assumption that time evolves only in one direction
and mechanical systems can be described by Lagrangeans, a 
dynamical C*-algebra is presented for non-relativistic particles at
atomic scales. Without presupposing any quantization
scheme, this algebra is inherently non-commutative and
comprises a large set of dynamics. In contrast to other
approaches, the generating elements of the algebra are not
interpreted as observables, but as operations on the
underlying system; they describe the impact of 
temporary perturbations caused by the surroundings. In accordance with 
the doctrine of Nils Bohr, the operations carry individual names of
classical significance. Without stipulating from the outset their
``quantization'', their concrete implementation in the quantum world
emerges from the inherent structure of the algebra.
In particular, the Heisenberg commutation relations for position and
velocity measurements are derived from it. 
Interacting systems can be described
within the algebraic setting by a rigorous version of the interaction picture.  
It is shown that Hilbert space representations of the 
algebra lead to the conventional formalism of quantum mechanics, 
where operations on states are described by time-ordered 
exponentials of interaction potentials. It is also  
discussed how the familiar statistical 
interpretation of quantum mechanics can be recovered from 
operations.} \\[2mm]
{Mathematics Subject Classification: \ 81P16; 81Q99; 81S99 }  
 \\[2mm]
{Keywords: \ operations; \ arrow of time; \ dynamical C*-algebra } \ 

\section{Introduction}
\setcounter{equation}{0}

The appearance of quantum mechanics in the first half  
of last century was an important stimulus for the 
development of the theory of operator algebras. 
Initiated by John von Neumann, who invented the  
concept of ``rings of operators'' \cite{vNe} 
(now called von Neumann algebras), 
it was later generalized by Irving Segal, who advocated  
the usage of ``normed rings'' \cite{Se} (now called C*-algebras). 
Thenceforce, the theory of operator algebras 
has been a most lively interface 
between theoretical physics and pure mathematics.

\medskip 
In the second half of last century, Richard Kadison adopted a 
key role in the fruitful exchange between operator algebraists 
and quantum physicists. He conceived and organized two major conferences 
which greatly furthered the subject: Baton Rouge, Louisiana, 
in 1967, and Kingston, Ontario, in 1980. He also cooperated successfully
with physicists on conceptual and constructive problems in the
quantum setting \cite{DoKaKa,HaKaKa,HuKa}. And he consistently 
addressed mathematical problems of physical significance, such as 
the uniqueness of quantum states, fixed by 
maximal sets of commuting observables~\cite{KaSi}, 
the type of local observable algebras \cite{Ka2}, and 
the energy momentum spectrum of quantum fields \cite{Ka3}. 
As a matter of fact, two articles, raising a question
in the representation theory of the Heisenberg commutation relations,  
happened to be his very last scientific contributions \cite{KaLi,KaLiTh}. 
These fundamental relations are also subject of the present article.
  
\medskip 
The non-commutativity of algebras, appearing in quantum physics,  
has been traced back to the incommensurability of complementary 
observables. Relationships between observables, such 
as the Heisenberg relations, were brought to light 
by ingenious considerations on the basis of experimental facts and 
theoretical inspirations. Remarkably, it is possible to establish  
these relations also by quite elementary (classical) considerations, 
thereby complying with the doctrine of Niels Bohr that our interventions 
into the quantum world ought to be described in terms of ``common language''. 
These simplifications were recently uncovered in algebraic
quantum field theory in a search for a dynamical principle,
valid in that framework \cite{BuFr}. It is the aim of the present 
article to apply this novel scheme to quantum mechanics.

\medskip 
Instead of focussing on observables, generating the algebras,
we regard as primary entities the family of operations, describing the 
impact of temporary perturbations of the dynamics on the underlying 
states. Such perturbations typically arise in measurement arrangements, 
where forces can be manipulated. At this point the 
direction of time enters already at the microscopic level; because 
one can firmly state that some operation has happened earlier, 
respectively later, than another one. This fixes a natural ordering of
subsequent operations which can be cast into a causality
relation. It is this feature, the arrow of time, which we regard   
as the origin of non-commutativity. 

\medskip 
The dynamics enters in our
approach through the specification of a classical Lagrangean.
As is familiar from classical mechanics, it allows one to define variations of 
the corresponding action, called relative actions. They determine
corresponding variations of the operations, which can be expressed 
in terms of a dynamical relation. It is possible to justify this 
relation within the conventional framework of quantum theory \cite{BuFr}. 
Instead, we will take it here as input and show that it entails, 
together with the causality relation, the known formalism 
of quantum mechanics. 

\medskip
Within the present framework, all operations are labelled by functionals 
on the classical configuration space. They describe the envisaged 
perturbations of the quantum system  in ``common language''. 
Yet there is no \textit{a priori} quantization rule underlying 
our construction; the actual form of the operations at the quantum 
level is encoded in the relations between operations. 
In particular, the Heisenberg commutation relations 
turn out to be a consequence of them. 

\medskip 
It is note-worthy that the present approach can be
applied to quite arbitrary Lagrangean systems.
In this respect it resembles the Feynman path 
integral formalism. Yet, instead of having to deal with the 
subtle definition of functional integrals living on 
configuration spaces, we can work directly in a C*-algebraic setting. 
It covers the full set of operations and resultant observables of the system.  
This fact made it possible to construct dynamical C*-algebras in case of 
quantum field theories in arbitrary spacetime dimensions \cite{BuFr}, 
where corresponding  Feynman path integrals may not exist. Perhaps even 
more interestingly, this approach sheds also new light on the 
foundations of quantum theory.

\medskip 
Our article is organized as follows. In the subsequent section we
introduce our notation and recall some basic facts from 
classical mechanics. Given any Lagrangean, we adopt in Sec.~2
the methods developed in \cite{BuFr} and construct for the case 
at hand a dynamical group of operations. By standard methods we 
proceed from this group to a corresponding dynamical C*-algebra
and discuss some of its general properties. In Sec.~3 we 
carry out the steps devised in \cite{BuFr}
and show by methods developed there
that the resulting C*-algebra contains in the 
non-interacting case unitary exponentials of the position 
and momentum operators (Weyl operators),
satisfying the Heisenberg relations. By relying on an 
abstract version of the interaction picture, 
we find that also in the present case 
the algebras obtained for different 
Lagrangeans can be embedded into each other by injective 
homomorphisms. Sec.~4 contains the proof 
that the dynamical algebras are irreducibly and regularly 
represented in the Schr\"odinger representation by time-ordered 
exponentials of functions of the position and momentum
operators, described in terms of 
the classical theory. In Sec.~5 we show how the
standard statistical interpretation of quantum mechanics can
be derived from the dynamical C*-algebra 
without having to rely from the outset on spectral projections. 
The article concludes with a brief summary and outlook.

\section{Classical mechanics}
In this section, we recall notions from classical mechanics 
and introduce our notation.
We consider a system of $N$ classical point particles in
$s$-dimensional configuration space $\RR^s$. Their positions
are subsumed by vectors
$\bxi \doteq (\bxi_1, \dots \bxi_N) \in \RR^{s N}$ and 
the scalar product in $\RR^{s N}$ is given by 
$\bxi \, \byi$. 
The possible motions (orbits) of the particles are described by arbitrary 
smooth functions $\bxi : \RR \rightarrow \RR^{sN}$,
depending on time. They form a space denoted 
by $\Cc$. We also consider a subspace $\Cc_0 \subset \Cc$
of functions $\bxi_0$ having compact support in time;
so they form closed loops about the origin of $\RR^{s N}$.
Time derivatives are indicated by a dot, 
$\dot{\bxi}$. In order to
simplify the notation, we assume that all particles have
the same mass, which is put equal to~$1$. 

\medskip
Of primary interest in our approach is a space $\Fc$  of
localized (in time) functionals $F : \Cc \rightarrow \RR$. 
These are functionals of the form
$\bxi \mapsto F[\bxi] \doteq \int \! dt \, F(\bxi(t))$, where 
\begin{equation} \label{e2.1}
  F(\bxi(t)) = \bfi_0(t) \, \bxi(t) \, +  \, \sum_k g_k(t) V_k(\bxi(t)) \, ;
\end{equation}
here $\bfi_0 \in \Cc_0$ is a fixed loop, 
$g_k \in \Dc(\RR)$ are test functions with compact support,
and $V_k : \RR^{sN} \rightarrow \RR$ are continuous, bounded
functions, describing perturbations of the system. 
These functionals can be shifted by loops 
$\bxi_0 \in \Cc_0$. The shifts  are given by
\[
F^{\bxi_0}[\bxi] \doteq F[\bxi + \bxi_0] \, , \quad \bxi_0 \in \Cc_0 \, .
\]
The functionals $F$ are in general non-linear, but they satisfy 
the additivity relation 
\begin{equation} \label{e2.2} 
F[\bxi_1 + \bxi_2 + \bxi_3] = F[\bxi_1 + \bxi_3]
-F[\bxi_3] + F[\bxi_2 + \bxi_3] 
\end{equation}
whenever $\bxi_1$ and $\bxi_2$ have disjoint supports. This becomes 
evident if one splits the time axis into three 
disjoint pieces 
consisting of the support of $\bxi_1$, the support of $\bxi_2$, and their
common complement. 

\medskip 
The support of the functionals on the time axis
is defined as the set of points for which there exist loops $\bxi_{0}$, 
having support in arbitrarily small neighbourhoods of these instances
of time, such that $F \neq F^{\bxi_0}$. Thus, assuming that the 
potentials $V_k$ 
are linearily independent and disregarding constant potentials, 
which lead to functionals with empty support, the
support of a functional~$F$ 
is equal to the union of the supports of the underlying
loops $\bfi_0$ and
test functions~$g_k$. We will say that a 
functional $F_1$ lies in the future of $F_2$ if the support of~$F_1$
happens to be later than that of $F_2$ with regard 
to the chosen time direction. 

\medskip
We consider Lagrangeans $\Lag$ of the form 
\begin{equation} \label{e2.3}
t \mapsto \Lag(\bxi(t)) = 1/2 \ \dot{\bxi}(t)^2 - V_I(\bxi(t)) \, , 
\quad \bxi \in \Cc \, ,
\end{equation}
where $V_I$ describes some continuous, bounded 
interaction potential between the 
particles. The time integral of a 
Lagrangean determines the action of the 
underlying mechanical systems. 
Given $\Lag$ and any loop $\bxi_0 \in \Cc_0$,  the 
variations of the corresponding action are defined by   
\[
\delta \Lag(\bxi_0)[\bxi] \doteq 
\int \! dt \, \chi(t) \big( \Lag(\bxi(t) + \bxi_0(t)) - \Lag(\bxi(t)) \big) \, ,
\]
where $\chi$ is any test function which is equal to $1$ on the support
of $ \bxi_0$. By a partial integration one finds that the resulting
functional has the form given in \eqref{e2.1},
\[
\delta \Lag(\bxi_0)[\bxi] = \int \! dt \, \chi(t)
\big(- \ddot{\bxi}_0(t) \, \bxi(t) + (1/2) \, \dot{\bxi}_0(t)^2
- V_I(\bxi(t) + \bxi_0(t)) + V_I(\bxi(t))  \big) \, .
\]
In particular, it does not depend on the choice of $\chi$ within the
above limitations. If the gradient $\partial V_I$ of the
potential exists, then the stationary points of the action 
with regard to arbitary variations $\bxi_0$ determine the Euler-Lagrange 
equation 
$$
\ddot{\bxi} + \partial V_I(\bxi) = 0 \, .
$$
Their solutions describe the actual motions 
(orbits) of the mechanical system,

\medskip 
We will have to consider the action of the propagators 
(Green's functions) of the 
differential operator $K \doteq  - \frac{d^2}{dt^2} $ 
in this equation on given loops; its sign is a 
matter of convenience. Adopting notation and
terminology used in analogy to \cite{BuFr},
the kernels of the retarded, respectively advanced, 
propagator are continuous functions of time given by
\[
t,t' \mapsto \Delta_R(t,t') = - \Theta(t-t')(t -t') \, ,
\qquad 
t,t' \mapsto \Delta_A(t,t') = \Theta(t'- t)(t - t') \, , 
\]
where $\Theta$ denotes the Heaviside step function. Their mean 
is denoted by 
\[
t,t' \mapsto \Delta_D(t,t') \doteq (1/2) 
\big( \Delta_R(t,t') +  \Delta_A(t,t') \big) = - 1/2 \, |t - t'| \, ,
\]
and their difference is the commutator function  given by
\[
t,t' \mapsto \Delta(t,t') \doteq  \Delta_R(t,t') - 
\Delta_A(t,t') = t' - t \, .
\]
These Green's functions, when acting on loops, satisfy the equations  
\[
K \, \Delta_R = \Delta_R \, K = 1 \, , \quad
K \, \Delta_A = \Delta_A \, K = 1 \, , \quad
K \, \Delta_D = \Delta_D \, K = 1 \, , \quad
K \, \Delta = \Delta \, K = 0 \, .
\]

\medskip
We conclude this section by noting that we regard the particles 
as distinguishable. In case they are indistinguishable,  one has
to restrict attention to functionals on $\Cc$ which are symmetric
with respect to permutations of the particle indices. 

\section{The dynamical C*-algebra}
\setcounter{equation}{0}

We turn now to the definition of the dynamical C*-algebra, describing
the mechanical system. As mentioned in the introduction, we adopt
the scheme which has been established in \cite{BuFr} in the context 
of quantum field theory. For the sake of a coherent exposition, we 
recall here this simple construction. It is the primary 
purpose of the present section to highlight the fact that the dynamical 
C*-algebra is entirely based on classical concepts without 
imposing from the outset any quantization conditions 
for observables. 

\medskip 
Given a Lagrangean $\Lag$, one constructs in a first step a dynamical 
group $\Group_\Lag$. It is the free group generated by elements $S(F)$, 
modulo certain specific dynamical and causal relations. These elements  
are labelled by elements $F$ of 
the space of functionals $\Fc$, introduced in the preceding section.

\medskip
\noindent \textbf{Definition:} Let $\Lag$ be a Lagrangean of the form
given in equation \eqref{e2.3}. The corresponding dynamical group $\Group_\Lag$
is the free group generated by symbols $S(F)$, $F \in \Fc$, modulo the 
relations \\[1mm]
(i) \ \ \
$S(F) = S(F^{\bxi_0} + \delta \Lag(\bxi_0))$ \ for all \ $\bxi_0 \in \Cc_0$, 
$F \in \Fc$ \\[1mm]
(ii) \ $S(F_1 + F_2 + F_3) = S(F_1 + F_3) \, S(F_3)^{-1} \, S(F_2 + F_3)$
\ for arbitrary functionals $F_3$, provided $F_1$ lies in the future 
of $F_2$,  

\medskip
The first equality encodes dynamical information. It describes how a 
variation of the action affects the functionals. If 
$F = 0$ one obtains $S(\delta \Lag(\bxi_0)) = S(0)$ for 
$\bxi_0 \in \Cc$, where without loss of generality we put $S(0) = 1$. 
These equations are, within the present setting, 
the analogue of the Euler-Lagrange equation in classical mechanics.

\medskip
The second equality describes the impact of the arrow of time
on the causal properties of the theory. This causality relation 
corresponds to equation~\eqref{e2.2} within the present setting, where  
the chosen order of the first (later) and last (earlier) term is a matter of
common convention.

\medskip
We shall show in the subsequent section that the group $\Gc_\Lag$ is 
inherently non-commutative, \ie it has non-commutative 
representations. An important parameter entering in this context 
is determined by the constant functionals
$F_\textrm{h} : \Cc \rightarrow \RR$ which, for $\textrm{h} \in \RR$, are 
given by $F_\textrm{h}[\bxi] \doteq \textrm{h}$, $\bxi \in \Cc$. Since constant
functionals have empty support, the  causality relation 
for $F_3 = 0$ implies
$S(F) S(F_\textrm{h}) = S(F + F_\textrm{h}) = S(F_\textrm{h}) S(F)$, \ie the 
elements $S(F_\textrm{h})$ lie in the center of $\Gc_\Lag$. As we shall
see, they set the scale of Planck's constant, which we 
put equal to $1$ (atomic units). 
 
\medskip
The passage from a group to a C*-algebra is a standard procedure,
which we briefly recall here for the case at hand. 
We proceed first from $\Gc_\Lag$ to the corresponding group algebra 
$\Ac_\Lag$ over $\CC$. It is by definition the
complex linear span of the elements $S \in \Gc_\Lag$. 
For notational convenience, we also fix the central elements 
corresponding to the constant functionals, putting
$S(F_\textrm{h}) \doteq e^{i\textrm{h}} 1$, $\textrm{h}
 \in \RR$. The adjoint of the 
elements of $\Ac_\Lag$ is defined by putting
$(\sum c \, S)^* \doteq \sum \overline{c} \, S^{-1}$ and the
multiplication in  $\Ac_\Lag$  is inherited from $\Gc_\Lag$ by the
distributive law.

\medskip
For the construction of a C*-norm on $\Ac_\Lag$, we proceed from the
fact that there exists a functional $\omega$ on this algebra 
which is obtained by linear extension from the defining equalities
$\omega(S) = 0$ for $S \in \Gc_\Lag \backslash \{\TT 1\}$ and $\omega(1) = 1$. 
So for any choice of a finite number of different 
elements $S_i \in \Gc_\Lag \backslash \{\TT 1\}$, $i = 1, \dots , n$,  
and $S_0 = 1$ one has
\[
\omega\Big( \big(\sum_{i} c_i S_i \big)^* \, 
\big(\sum_{j} c_j S_j \big) \Big)
= \sum_{i,j} \overline{c}_i \, c_j \, \omega(S_i^{-1} S_j)
= \sum_{i} |c_i|^2 \geq 0 \, .
\]
This shows that, apart from the zero element, the functional 
$\omega$ has positive values on positive
elements of $\Ac_\Lag$, \ \ie it is a faithful state.
Thus, putting 
\[
\| A \|^2 \doteq \sup_{\omega'} \, \omega'(A^* A) \, , \quad A \in \Ac_\Lag \, ,
\]
where the supremum extends over all states $\omega'$ on $\Ac_\Lag$,
one obtains a C*-norm on $\Ac_\Lag$. Note that 
the supremum exists since the elements of $\Ac_\Lag$ are finite
linear combinations of unitary operators. 
The completion of $\Ac_\Lag$ 
with regard to this norm is a C*-algebra, which will be denoted
by the same symbol. 

\medskip
\noindent \textbf{Definition:} Given a Lagrangean $\Lag$, the 
corresponding dynamical algebra $\Ac_\Lag$ is the C*-algebra
determined by the group $\Gc_\Lag$, as explained above.

\section{Heisenberg commutation relations and dynamics}
\setcounter{equation}{0}

As a first application of our framework, we discuss the case of 
non-interacting particles, which are described by the Lagrangean 
\[
t \mapsto \Lag_0(\bxi(t)) = (1/2) \, \dot{\bxi}(t)^2 \, , 
\quad \bxi \in \Cc \, .
\]
The corresponding dynamical C*-algebra is $\Ac_{\Lag_0}$
and its generating unitary operators are denoted by $S_{\Lag_0}$. 
For the proof that this algebra contains 
operators satisfying the Heisenberg relations, we 
consider for given loop functions $\bfi_0 \in \Cc_0$ the functionals
\begin{equation} \label{e4.1}
F_{\bfi_0}[\bxi] \doteq \langle \bfi_0, \bxi \rangle + 
(1/2) \, \langle \bfi_0, \Delta_D \, \bfi_0 \rangle \, ,
\quad \bxi \in \Cc \, .
\end{equation}
Here we made use of the notation 
$ \langle \bfi_0 , \bxi \rangle \doteq 
\int \! dt \, \bfi_0(t) \bxi(t) $;   
the propagator~$\Delta_D$ was defined at the end of Sec.~2. 
We then define corresponding unitary operators  in $\Ac_{\Lag_0}$,
putting
\begin{equation} \label{e4.2}
W(\bfi_0) \doteq S_{\Lag_0}(F_{\bfi_0}) \, , \quad \bfi_0 \in \Cc_0 \, .
\end{equation}

\medskip 
As we shall see, these operators satisfy the Heisenberg 
commutation relations
in Weyl form. For the proof we need to have a closer look at the
underlying functionals.
Picking any loop $\bxi_0 \in \Cc_0$, putting its 
second time derivative $K \bxi_0$ into the functional, and
recalling that $\Delta_D \, K =  1$, we obtain 
\[
F_{K \bxi_0}[\bxi] 
= - \langle \ddot{\bxi}_0, \bxi \rangle - 
(1/2) \langle \ddot{\bxi}_0, \bxi_0 \rangle
=  - \langle \ddot{\bxi}_0, \bxi \rangle + 
(1/2) \langle \dot{\bxi}_0, \dot{\bxi}_0 \rangle
= \delta \Lag_0(\bxi_0)[\bxi] \, . 
\]
Thus the dynamical relation in $\Ac_\Lag$
implies $W(K \bxi_0) = S_{\Lag_0}(F_{K \bxi_0}) = 1$.

\medskip
Next, given $\bfi_0 \in \Cc_0$, let 
$\bfi_0 = \bfi_0^{\, \prime} + K \bxi_0$ be any decompostion  
with $\bfi_0^{\, \prime}, \bxi_0 \in \Cc_0$.
Plugging it into the functional gives 
\begin{align*}
F_{\bfi_0}[\bxi] & = \langle (\bfi_0^{\, \prime} + K \bxi_0) \, , \, \bxi \rangle +
(1/2) \langle (\bfi_0^{\, \prime} + K \bxi_0), \, \Delta_D \, 
(\bfi_0^{\, \prime} + K \bxi_0) \rangle  \\
& = \big( \langle \bfi_0^{\, \prime}\, , \, \bxi + \bxi_0 \rangle 
+ (1/2) \langle \bfi_0^{\, \prime} \, , \, \Delta_D \ \bfi_0^{\, \prime} 
\rangle \big) +
\big(\langle K \bxi_0 \, , \, \bxi \rangle 
+ (1/2) \langle K \bxi_0 \, , \, \Delta_D \, K \bxi_0 \rangle \big) \\
& = F_{\bfi_0^{\, \prime}}^{\, \bxi_0}[x] +
 F_{K \bxi_0}[x] = F_{\bfi_0^{\, \prime}}^{\, \bxi_0}[x] + \delta \Lag(\bxi_0)[x] \, .
\end{align*}
The dynamical relation in $\Ac_\Lag$, together with the preceding
equality, imply
\[
W(\bfi_0) = S_{\Lag_0}(F_{\bfi_0}) = S_{\Lag_0}(F_{\bfi_0^{\, \prime}}^{\, \bxi_0} + 
\delta \Lag_0(\bxi_0))
= S_{\Lag_0}(F_{\bfi_0^{\, \prime}}) = W(\bfi_0^{\, \prime}) \, ,
\]
which generalizes the relation $W( K \bxi_0) = 1$, obtained 
in the preceding step. 

\medskip
Finally, we pick two arbitrary loop functions $\bfi_{0}, \bgi_{0} \in \Cc_0$
and choose a decomposition $\bfi_{0} = \bfi_0^{\, \prime} + K \bxi_{0}$
such that $\bfi_0^{\, \prime}$ lies in the future of $\bgi_{0}$. 
That such a decomposition exists can be seen as follows. Choose
a smooth step function $\chi$ which has support in the future of 
$\bgi_{0}$ and is equal to $1$ at large times. Putting 
$\bfi_0^{\, \prime} \doteq K \chi \Delta_R \bfi_{0}$ and
$\bxi_{0} \doteq (1 - \chi) \Delta_R \bfi_{0}$,
one obtains loop functions with the desired support 
properties and  
$\bfi_0^{\, \prime} + K \bxi_{0} = K \Delta_R \bfi_{0} = \bfi_{0}$, as
claimed. This equality, together with the preceding results 
and the causality relation imply 
\[
W(\bfi_{0}) W(\bgi_{0}) = 
S_{\Lag_0}(F_{\bfi_{0}}) S_{\Lag_0}(F_{\bgi_{0}}) =  S_{\Lag_0}(F_{\bfi_0^{\, \prime}}) 
S_{\Lag_0}(F_{\bgi_{0}})
= S_{\Lag_0}(F_{\bfi_0^{\, \prime}} +F_{\bgi_{0}}) \, .
\]
Now, for $\bxi \in \Cc$, 
\begin{align*}
& F_{\bfi_0^{\, \prime}}[\bxi] + F_{\bgi_{0}}[\bxi] = 
\langle \bfi_0^{\, \prime}, \bxi \rangle + (1/2) 
\langle \bfi_0^{\, \prime}, \Delta_D \, \bfi_0^{\, \prime} \rangle + 
\langle \bgi_{0}, \bxi \rangle + (1/2) 
\langle \bgi_{0}, \Delta_D \, \bgi_{0} \rangle \\
& = \langle (\bfi_0^{\, \prime} + \bgi_{0}), \bxi \rangle +
(1/2) \langle (\bfi_0^{\, \prime} + \bgi_{0}), \Delta_D \, 
(\bfi_0^{\, \prime} + \bgi_{0}) \rangle - 
\langle \bfi_0^{\, \prime}, \Delta_D \, \bgi_{0} \rangle \\
& = F_{\bfi_0^{\, \prime} + \bgi_{0}}[\bxi] - \langle \bfi_0^{\, \prime}, 
\Delta_D \, \bgi_{0} \rangle \, .
\end{align*}
For the last term, being a constant functional, 
we obtain in view of the support properties of $\bfi_0^{\, \prime}, \, \bgi_{0}$ 
and the fact that $K \Delta = 0$  
\begin{align*}
\langle \bfi_0^{\, \prime}, \Delta_D \, \bgi_{0} \rangle
& = (1/2) \langle \bfi_0^{\, \prime}, \Delta_R \, \bgi_{0} \rangle
= (1/2) \langle \bfi_0^{\, \prime}, \Delta \, \bgi_{0} \rangle \\
& = (1/2) \langle (\bfi_0^{\, \prime} + K \bxi_{0}), \Delta \, \bgi_{0} \rangle
= (1/2) \langle \bfi_{0}, \Delta \, \bgi_{0} \rangle \, .
\end{align*} 
Bearing in mind the results of the preceding step, this 
implies  
\[ 
S_{\Lag_0}(F_{\bfi_0^{\, \prime}} + F_{\bgi_{0}}) =
e^{-(i/2)\langle \bfi_{0}, \Delta \, \bgi_{0}  \rangle} \, 
S_{\Lag_0}(F_{\bfi_0^{\, \prime} + \bgi_{0}}) = 
e^{-(i/2)\langle \bfi_{0}, \Delta \, \bgi_{0}  \rangle} \, 
W(\bfi_{0} +\bgi_{0})  \, ,
\] 
so this operator coincides with the product $W(\bfi_{0}) W(\bgi_{0})$.
The results obtained so far are summarized in the subsequent 
theorem.  

\begin{theorem} \label{t4.1}
Let $W(\bfi_0) \in \Ac_{\Lag_0}$, $\bfi_0 \in \Cc_0$, be the
unitary operators, defined in equation~\eqref{e4.2}. Then 
\[ 
W(\bfi_{0}) W(\bgi_{0}) = 
e^{-(i/2)\langle \bfi_{0}, \Delta \, \bgi_{0}  \rangle} \, 
W(\bfi_{0} +\bgi_{0}) \, , \quad \bfi_{0}, \bgi_{0} \in \Cc_0 \, .
\]
Moreover, there hold the dynamical relations $W(K \bxi_{0}) =  1$,
$\bxi_{0} \in \Cc_0$. 
\end{theorem}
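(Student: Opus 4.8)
The plan is to prove both assertions by feeding the two group relations of the Definition into the explicit form \eqref{e4.1} of the functionals $F_{\bfi_0}$, using only the Green's-function identities $\Delta_D \, K = 1$, $K \Delta = 0$, and $\Delta = \Delta_R - \Delta_A$. I would organise the argument in three stages: first the dynamical relation $W(K\bxi_0)=1$, then a ``gauge invariance'' showing that $W(\bfi_0)$ depends only on $\bfi_0$ modulo the image of $K$, and finally the Weyl relation, where the phase will emerge as a central, constant functional whose value is $S(F_\textrm{h}) = e^{i\textrm{h}}1$.

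For the dynamical relation, I would insert $K\bxi_0$ into \eqref{e4.1} and use $\Delta_D\,K=1$ to get $F_{K\bxi_0}[\bxi] = -\langle\ddot{\bxi}_0,\bxi\rangle + (1/2)\langle\dot{\bxi}_0,\dot{\bxi}_0\rangle$ after a partial integration, which is exactly $\delta\Lag_0(\bxi_0)[\bxi]$. Relation~(i) with $F=0$ then gives $W(K\bxi_0) = S_{\Lag_0}(\delta\Lag_0(\bxi_0)) = 1$. For the gauge invariance I would take any decomposition $\bfi_0 = \bfi_0^{\,\prime} + K\bxi_0$ with $\bfi_0^{\,\prime},\bxi_0\in\Cc_0$ and expand the quadratic form in \eqref{e4.1}; the cross terms combine, via symmetry of the kernel $\Delta_D$ together with $\Delta_D\,K=1$, into a shift of the argument, so that $F_{\bfi_0} = F_{\bfi_0^{\,\prime}}^{\,\bxi_0} + \delta\Lag_0(\bxi_0)$. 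Relation~(i) then yields $W(\bfi_0)=W(\bfi_0^{\,\prime})$, so $W$ factors through $\Cc_0 / K\Cc_0$.

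For the Weyl relation I would exploit this gauge freedom to move $\bfi_0$ to a representative $\bfi_0^{\,\prime}$ whose support lies in the future of $\bgi_0$, since only then does the causality relation~(ii) apply. With such a representative, relation~(ii) for $F_3=0$ factors the product as $W(\bfi_0)W(\bgi_0)=S_{\Lag_0}(F_{\bfi_0^{\,\prime}}+F_{\bgi_0})$. I would then compute $F_{\bfi_0^{\,\prime}}+F_{\bgi_0} = F_{\bfi_0^{\,\prime}+\bgi_0} - \langle\bfi_0^{\,\prime},\Delta_D\,\bgi_0\rangle$, the last term being a constant functional, hence central, contributing the phase $e^{-(i/2)\langle\bfi_0^{\,\prime},\Delta_D\,\bgi_0\rangle}$ through $S(F_\textrm{h})=e^{i\textrm{h}}1$. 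Because $\bfi_0^{\,\prime}$ sits in the future of $\bgi_0$, the advanced part of $\Delta_D$ annihilates the pairing, so $\langle\bfi_0^{\,\prime},\Delta_D\,\bgi_0\rangle = (1/2)\langle\bfi_0^{\,\prime},\Delta\,\bgi_0\rangle$; and since $\bfi_0^{\,\prime}$ differs from $\bfi_0$ by $K\bxi_0$ with $K\Delta=0$, this equals $(1/2)\langle\bfi_0,\Delta\,\bgi_0\rangle$. Applying gauge invariance once more to replace $W(\bfi_0^{\,\prime}+\bgi_0)$ by $W(\bfi_0+\bgi_0)$ closes the argument.

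The main obstacle I anticipate is constructing the desired decomposition: one needs $\bfi_0^{\,\prime}$ to lie strictly in the future of $\bgi_0$ while both $\bfi_0^{\,\prime}$ and $\bxi_0$ remain genuine elements of $\Cc_0$ (compactly supported loops). I would handle this by choosing a smooth step function $\chi$ supported in the future of $\bgi_0$ and equal to $1$ at large times, and setting $\bfi_0^{\,\prime}\doteq K\chi\Delta_R\bfi_0$, $\bxi_0\doteq(1-\chi)\Delta_R\bfi_0$; here the retarded propagator keeps the support bounded on the left and $\chi$ on the right, and one checks $\bfi_0^{\,\prime}+K\bxi_0 = K\Delta_R\bfi_0 = \bfi_0$. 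The second delicate point is the conversion $\Delta_D \to \Delta$ on the constant functional, which rests entirely on the support-based vanishing of the advanced contribution and on $K\Delta=0$; both must be verified carefully, but neither requires machinery beyond the Green's-function identities recorded at the end of Sec.~2.
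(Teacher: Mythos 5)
Your proposal is correct and follows essentially the same route as the paper's own proof: the same three stages (dynamical relation $W(K\bxi_0)=1$, invariance of $W$ under shifts by $K\Cc_0$ via the cross-term/shift identity, then causal factorization after passing to a future-supported representative), the same decomposition $\bfi_0^{\,\prime} \doteq K\chi\Delta_R\bfi_0$, $\bxi_0 \doteq (1-\chi)\Delta_R\bfi_0$, and the same Green's-function manipulations ($\Delta_A$-pairing vanishing by supports, $K\Delta = 0$) to identify the central phase as $e^{-(i/2)\langle\bfi_0,\Delta\,\bgi_0\rangle}$. No gaps to report.
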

It follows from this theorem that 
the operators $W(\bfi_0)$, $\bfi_0 \in \Cc_0$,
form a unitary Lie group, the Weyl group $\Weyl$. 
Proceeding to its Lie algebra and 
denoting the corresponding generators by 
$\langle \bxi_0, \bQi \rangle$ and $1$, 
the dynamical relations imply   
$\langle K \bxi_0, \bQi \rangle = 0$, $\bxi_0 \in \Cc_0$, hence 
\[
\bQi(t) = \bQi + t \dot{\bQi} \, , \quad t \in \RR \, ,
\]
where we have absorbed possible constants into
$\bQi$ and $\dot{\bQi}$. The non-trivial commutators of the generators are \ 
$[\langle \bfi_{0}, \bQi \rangle, \langle \bgi_{0}, \bQi \rangle]
= i \langle \bfi_{0}, \, \Delta \, \bfi_{0} \rangle 1 \, .
$
They yield for the components of the generators the 
commutation relations
\[
[\bQi_k, \, \dot{\bQi}_l] = i \, \delta_{k l} 1 \, , \quad
[\bQi_k, \, \bQi_l] = [\dot{\bQi}_k, \, \dot{\bQi}_l] = 0 \, , 
\quad k,l = 1, \dots, sN \, .
\]
Identifying $\bQi$ with position and the
velocity $\dot{\bQi}$ with momentum $\bPi$, these are the Heisenberg 
commutation relations for the corresponding quantum observables.
The dynamical relation is the solution of the Heisenberg equation 
in the absence of interaction. 

\medskip
We will show in Sec.\ 5 that the linear functionals 
$\bxi \mapsto L_{\bfi_0}[\bxi] \doteq  
\langle \bfi_0, \bxi \rangle$, which appeared in the preceding step,
give rise to unitaries $S_{\Lag_0}(L_{\bfi_0})$ which are the time-ordered 
exponentials of 
the corresponding generators $\langle \bfi_0, \bQi \rangle$.
Similarly, the 
unitaries $S_{\Lag_0}(F)$ for functionals~$F$ of the form~\eqref{e2.1} 
are the respective 
time-ordered exponentials, where the classical orbits $t \mapsto \bxi(t)$ 
are replaced by $t \mapsto \bQi(t)$.  
So, instead of representing the observable $F$ at the
quantum level, the operators 
$S_{\Lag_0}(F)$ describe the perturbations of the non-interacting dynamics,  
caused by their temporary action, in accordance with the interaction 
picture in quantum mechanics. 

\medskip
This insight enters in our subsequent arguments, where we compare the 
algebras $\Ac_\Lag$ for different Lagrangeans $\Lag$. Given any 
Lagrangean $\Lag_0$ (which may differ from the non-interacting one), 
we will show that the algebras $\Ac_\Lag$ for the perturbed Lagrangeans 
$\Lag = \Lag_0 - V_I$ can be embedded by an injective 
homomorphism into the algebra 
$\Ac_{\Lag_0}$ associated with $\Lag_0$. To this end we 
choose any interval $\II \subset \RR$ and a corresponding
smooth characteristic function $\chi$
which has support in a slightly larger interval 
$\hat{\II} \supset \II$. We then consider
the temporary perturbation of the Lagrangean $\Lag_0$, 
\[
t \mapsto \Lag_\chi(\bxi(t)) \doteq \Lag_0(\bxi(t)) - \chi(t) V_I(\bxi(t)) \, , 
\quad \bxi \in \Cc \, .
\] 
The corresponding relative action 
for $\bxi_0 \in \Cc_0$ is given by 
\[
\delta \Lag_\chi(\bxi_0) = \delta \Lag_0(\bxi_0) - V_I^{\bxi_0}(\chi) +
V_I(\chi) \, ,
\]
where $\bxi \mapsto V_I(\chi)[\bxi] \doteq   
\int \! dt \, \chi(t) V_I(\bxi(t))$. Note that 
$\delta \Lag_\chi(\bxi_0)$ coincides with the full
relative action $\delta \Lag(\bxi_0)$
for loops $\bxi_0$ having support in $\II$.
We define now the unitary operators in $\Ac_{\Lag_0}$ 
\[
S_{\Lag_\chi}(F) \doteq S_{\Lag_0}(-V_I(\chi))^{-1} \, S_{\Lag_0}(F - V_I(\chi)) \, ,
\quad F \in \Fc \, . 
\]
It is apparent that the operators $S_{\Lag_\chi}(F)$, $F \in \Fc$,
generate the algebra $\Ac_{\Lag_0}$.
Making use of the dynamical relation in $\Ac_{\Lag_0}$, we obtain
for $\bxi_0 \in \Cc_0$ 
\begin{align*}
S_{\Lag_\chi}(F^{\bxi_0} + \delta \Lag_\chi(\bxi_0)) & =
S_{\Lag_0}(-V_I(\chi))^{-1} \,
S_{\Lag_0}(F^{\bxi_0}  - V_I^{\bxi_0}(\chi) + \delta \Lag_0(\bxi_0)) \\ 
& = S_{\Lag_0}(-V_I(\chi))^{-1} \,
S_{\Lag_0}(F  - V_I(\chi)) = S_{\Lag_\chi}(F) \, .
\end{align*}
Similarly, if $F_1, F_2, F_3 \in \Fc$ are functionals such that 
$F_1$ lies in the future of $F_2$, the causal relations in 
$\Ac_{\Lag_0}$ imply
\begin{align*}
& S_{\Lag_\chi}(F_1 + F_3) \ S_{\Lag_\chi}(F_3)^{-1}  S_{\Lag_\chi}(F_2 + F_3) \\
& = S_{\Lag_0}(-V_I(\chi))^{-1} \, S_{\Lag_0}(F_1 + F_3 - V_I(\chi)) \, 
  S_{\Lag_0}(F_3 - V_I(\chi))^{-1} \, S_{\Lag_0}(F_2 + F_3 - V_I(\chi)) \\
& = S_{\Lag_0}(-V_I(\chi))^{-1} \,  S_{\Lag_0}(F_1 + F_2 + F_3 - V_I(\chi)) =
S_{\Lag_\chi}(F_1 + F_2 + F_3) \, .
\end{align*}
So the unitary operators $S_{\Lag_\chi} : \Fc \rightarrow \Ac_{\Lag_0}$ satisfy the
defining relations of the dynamical group $\Gc_{\Lag_\chi}$
for the Lagrangean $\Lag_\chi$. Moreover, $\Ac_{\Lag_\chi} = \Ac_{\Lag_0}$. 

\medskip 
We want to control the limit $\II \nearrow \RR$.  
To this end we restrict the unitaries $S_{\Lag_\chi}$ 
to functionals in $\Fc(\II) \subset \Fc$, 
having support in~$\II$. These restrictions generate 
a subgroup $\Gc_{\Lag_\chi}(\II) \subset \Gc_{\Lag_\chi}$. 
Since $\delta \Lag_\chi(\bxi_0) = \delta \Lag(\bxi_0)$ for 
loops $\bxi_0$ having support in $\II$, this subgroup is isomorphic to the 
group $\Gc_\Lag(\II)$, which is obtained by restricting the unitaries
$S_\Lag$, assigned to the full Lagrangean, to $\Fc(\II)$.
The resulting isomorphism 
$\beta_{\, \II, \chi} : \Gc_{\Lag_\chi}(\II) \rightarrow \Gc_\Lag(\II)$ 
extends to congruent linear combinations
of the group elements, forming algebras 
$\Ac_{\Lag_\chi}(\II) \subset \Ac_{\Lag_0}$ and $\Ac_\Lag(\II) \subset \Ac_\Lag$,
respectively. Denoting by $\| \, \cdot \, \|_{\Lag_0}$ and
$\| \, \cdot \, \|_\Lag$ the C*-norms on $\Ac_{\Lag_0}$ and $\Ac_\Lag$, 
we define norms $\|\beta_{\, \II, \chi}( \, \cdot \, )\|_\Lag$
on $\Ac_{\Lag_\chi}(\II)$ and 
$\|\beta_{\, \II, \chi}^{-1}( \, \cdot \, )\|_{\Lag_0}$ on $\Ac_\Lag(\II)$.
Because of the maximality of the original C*-norms, one has
$\|\beta_{\, \II, \chi}( \, \cdot \, )\|_\Lag \leq \| \, \cdot \, \|_{\Lag_0}$
and $\|\beta_{\, \II, \chi}^{-1}( \, \cdot \, )\|_{\Lag_0} \leq \| \, 
\cdot \, \|_\Lag$. This implies  
$\|\beta_{\, \II, \chi}( \, \cdot \, )\|_\Lag = \| \, \cdot \, \|_{\Lag_0}$
and $\|\beta_{\, \II, \chi}^{-1}( \, \cdot \, )\|_{\Lag_0} = \| \, \cdot \, \|_\Lag$.
It follows that the isomorphism 
$\beta_{\, \II, \chi} : \Ac_{\Lag_\chi}(\II) \rightarrow \Ac_\Lag(\II)$ 
extends to an isomorphims between the norm closures of these 
subalgebras of $\Ac_{\Lag_0}$, respectively $\Ac_\Lag$, which we
denote by the same symbols. 

\medskip
In the next step we need to determine the dependence of
the  isomorphisms $\beta_{\, \II, \chi}$ on the choice of the
smooth characteristic function $\chi$ for given 
interval $\II$. Let $\chi_1$ and $\chi_2$ be two such 
functions which both have support in $\hat{\II} \supset \II$
and are equal to $1$ in some neighbourhood of $\II$.
One then has 
$\chi_2 - \chi_1 = \chi_+ + \chi_-$, where $\chi_+$ has 
support in the future of $\II$ and $\chi_-$ in its past. 
Picking any functional $F \in \Fc(\II)$,  
it follows from the causality relation for the 
operators $S_{\Lag_0}$ that
\begin{align*}
& S_{\Lag_0}(F - V_I(\chi_2)) \\
& = S_{\Lag_0}(-V_I(\chi_+) -V_I(\chi_1 + \chi_-)) \,  
S_{\Lag_0}(-V_I(\chi_1 + \chi_-))^{-1}
S_{\Lag_0}(F - V_I(\chi_1 + \chi_-)) \, .
\end{align*}
In a similar manner one obtains
\[
S_{\Lag_0}(F - V_I(\chi_1 + \chi_-)) = S_{\Lag_0}(F - V_I(\chi_1)) \, 
S_{\Lag_0}(-V_I(\chi_1)^{-1} \, S_{\Lag_0}(-V_I(\chi_1) - V_I(\chi_-)) \, .
\]
Plugging these equalities into the defining equation
for $S_{\Lag_{\chi_2}}$, we obtain
\[
S_{\Lag_{\chi_2}}(F) =  S_{\Lag_{\chi_1}}(-V_I(\chi_-))^{-1} \, S_{\Lag_{\chi_1}}(F) \, 
S_{\Lag_{\chi_1}}(-V_I(\chi_-)) \, , \quad F \in \Fc(\II) \, .
\]
This relation shows that the isomorphisms 
$\beta_{\, \II, \chi_1}^{-1}, \beta_{\, \II, \chi_2}^{-1}$, mapping 
$\Ac_{\Lag} (\II)$ onto the subalgebras $\Ac_{\Lag_{\chi_1}}(\II)$,
respectively $\Ac_{\Lag_{\chi_2}}(\II)$, 
of $\Ac_{\Lag_0}$, are related by an inner automorphism of 
$\Ac_{\Lag_0}$. 
Putting $U_{\chi_2, \chi_1} \doteq  S_{\Lag_{\chi_1}}(-V_I(\chi_-))^{-1}$, 
the preceding equality implies 
\[
\text{Ad} \, U_{\chi_2, \chi_1} \circ  \beta_{\, \II, \chi_1}^{-1}
= \beta_{\, \II, \chi_2}^{-1} \, .
\]
Since the supports of $\chi_1, \chi_2$ are contained in $\hat{\II}$,
it is also clear that $U_{\chi_2, \chi_1} \in \Ac_{\Lag_0}(\hat{\II})$.

\medskip 
We choose now an increasing sequence of intervals 
$\II_n \subset \hat{\II}_n \subset \II_{n+1}$, which exhaust 
$\RR$, and corresponding smooth 
characteristic functions $\chi_n$ which are equal to $1$ on 
$\II_n$ and have support in $\hat{\II}_n$,  $n \in \NN$.
It follows from the definition of the algebras $\Ac_\Lag(\II)$
and isomorphisms $\beta^{-1}_{\II, \chi}$ that
$\beta^{-1}_{\II_{n+1}, \chi_{n+1}} \upharpoonright \Ac_\Lag(\II_n)
= \beta^{-1}_{\II_n, \chi_{n+1}}$, $n \in \NN$.
On the basis of the preceding results, we define 
isomorphisms $\gamma_{\ \II_n}$, putting
\[
\gamma_{\ \II_n} \doteq \text{Ad} \, (U_{\chi_{n},\chi_{n-1}} 
\cdots U_{\chi_2,\chi_1})^{-1} \circ
\beta_{\, \II_n, \chi_n}^{-1} \, , \quad n \in \NN + 1 \, .
\]
It follows from the support properties of the functions $\chi_n$ that
$\gamma_{\ \II_n}(\Ac_\Lag(\II_n)) \subset \Ac_{\Lag_0}(\hat{\II}_n)$.
Moreover, 
\begin{align*}
\gamma_{\ \II_{n+1}} \upharpoonright \Ac_\Lag(\II_n)
& = \text{Ad} \, (U_{\chi_{n},\chi_{n-1}} \cdots U_{\chi_2,\chi_1})^{-1} \circ
\text{Ad} \, U_{\chi_{n+1},\chi_{n}}^{-1} \circ \beta_{\II_n, \chi_{n+1}}^{-1}
\upharpoonright \Ac_\Lag(\II_n) \\
& = \text{Ad} \, (U_{\chi_{n},\chi_{n-1}} \cdots U_{\chi_2,\chi_1})^{-1} \circ
 \beta_{\II_n, \chi_n}^{-1} \ = \ \gamma_{\ \II_n} \, , \quad n \in \NN \, .
\end{align*}
Thus, for any given interval $\II_m$, the restrictions 
$\gamma_{\ \II_n} \upharpoonright \Ac_\Lag(\II_m)$ stay constant for 
$n \geq m$ and their range is contained in $\Ac_{\Lag_0}(\hat{\II}_m)$. 
Since $ \Ac_\Lag$ is the C*-inductive limit of its subalgebras 
$\Ac_\Lag(\II_m)$, $m \in \NN$, it follows that the limit 
$\gamma \doteq \lim_n \gamma_{\ \II_n}$ exists pointwise in norm
on $ \Ac_\Lag$ and has range in $\Ac_{\Lag_0}$. More explicitly, one 
has for any interval $\II \subset \RR$ and sufficiently large
$n \in \NN$
\begin{equation} \label{e4.3} 
\gamma(S_\Lag(F)) = 
\text{Ad} \, (U_{\chi_{n},\chi_{n-1}} \cdots U_{\chi_2,\chi_1})^{-1} 
(S_{\Lag_0}(F)) \, , \quad
F \in \Fc(\II) \, . 
\end{equation}
Recalling that the choice of Lagrangeans $\Lag_0, \Lag$ was arbitrary, we 
have arrived at the following theorem, relating the dynamical algebras  
attached to different dynamics.
\begin{theorem} \label{t4.2}
Let $\Lag_0, \Lag$ be Lagrangeans of the form 
given in equation \eqref{e2.3}. The algebra $\Ac_\Lag$ can 
be embedded into $\Ac_{\Lag_0}$ by the injective 
homomorphism $\gamma$ given in  
equation \eqref{e4.3}. Moreover, for any interval $\II \subset \RR$
there is some interval $\hat{\II} \supset \II$ such that 
$\gamma(\Ac_\Lag(\II)) \subset \Ac_{\Lag_0}(\hat{\II})$. 
\end{theorem}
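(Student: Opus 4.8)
The plan is to take the local isomorphisms and inner automorphisms constructed above and glue them into a single map on the inductive limit $\Ac_\Lag$, then read off injectivity and the locality bound from that construction.

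First I would check that $\gamma \doteq \lim_n \gamma_{\,\II_n}$ is a well-defined $*$-homomorphism. The decisive input is the compatibility relation $\gamma_{\,\II_{n+1}} \upharpoonright \Ac_\Lag(\II_n) = \gamma_{\,\II_n}$ verified just above: it says that on each fixed local subalgebra $\Ac_\Lag(\II_m)$ the sequence $\gamma_{\,\II_n}$ is eventually constant, namely equal to $\gamma_{\,\II_m}$ for all $n \geq m$. Since $\Ac_\Lag$ is the C*-inductive limit of the $\Ac_\Lag(\II_m)$, the union $\bigcup_m \Ac_\Lag(\II_m)$ is norm-dense, and the stabilized values define $\gamma$ unambiguously there. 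Each $\gamma_{\,\II_n}$ is a $*$-homomorphism, being the composition of the isomorphism $\beta_{\,\II_n,\chi_n}^{-1}$ with an inner automorphism, so the same holds for $\gamma$ on the dense subalgebra; it then extends to a $*$-homomorphism of $\Ac_\Lag$ by continuity, the requisite uniform boundedness being the isometry established next.

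Continuity and injectivity both flow from isometry. Each $\beta_{\,\II_n,\chi_n}^{-1}$ is isometric, since the maximality of the C*-norms forced the equality $\|\beta_{\,\II,\chi}^{-1}(\,\cdot\,)\|_{\Lag_0} = \|\,\cdot\,\|_\Lag$; conjugation by a unitary is likewise isometric, so each $\gamma_{\,\II_n}$ is isometric on its domain. As the $\gamma_{\,\II_n}$ agree on overlaps, $\gamma$ is isometric on the dense subalgebra, hence on all of $\Ac_\Lag$. An isometric $*$-homomorphism of C*-algebras has trivial kernel, which yields the asserted embedding.

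For the locality statement I would argue directly from \eqref{e4.3}. Given $\II$, pick $n$ with $\II \subset \II_n$; then for $F \in \Fc(\II)$ one has $\gamma(S_\Lag(F)) = \text{Ad}\,(U_{\chi_n,\chi_{n-1}} \cdots U_{\chi_2,\chi_1})^{-1}(S_{\Lag_0}(F))$. Here $S_{\Lag_0}(F) \in \Ac_{\Lag_0}(\II_n)$, while each unitary $U_{\chi_k,\chi_{k-1}}$ is supported in $\hat{\II}_k \subset \hat{\II}_n$; conjugation by their product therefore keeps the result inside $\Ac_{\Lag_0}(\hat{\II}_n)$, so $\hat{\II} \doteq \hat{\II}_n$ works. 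I expect the only genuine obstacle to be the well-definedness step of the second paragraph: one must be sure that the pointwise-norm limit of the $\gamma_{\,\II_n}$ exists and is multiplicative, and this rests entirely on the compatibility relation together with the isometry of the building blocks; every remaining assertion is then a formal consequence of the inductive-limit structure and support bookkeeping.
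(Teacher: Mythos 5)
Your proposal is correct and follows essentially the same route as the paper: the paper likewise obtains $\gamma$ as the pointwise norm limit of the $\gamma_{\,\II_n}$ via the compatibility relation and the C*-inductive limit structure of $\Ac_\Lag$, with isometry of the maps $\beta_{\,\II,\chi}^{-1}$ (forced by maximality of the C*-norms) and the support properties of the unitaries $U_{\chi_k,\chi_{k-1}}$ giving the embedding and the locality bound $\gamma(\Ac_\Lag(\II)) \subset \Ac_{\Lag_0}(\hat{\II})$. Your explicit argument that an isometric $*$-homomorphism is injective merely spells out a step the paper leaves implicit.
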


\section{Representations}
\setcounter{equation}{0}

We turn now to the construction of representations of the dynamical 
algebras. It suffices to focus on representations of the algebra 
$\Ac_{\Lag_0}$ for the non-interacting Lagrangean $\Lag_0$.
According to Theorem \eqref{t4.2},
all other algebras $\Ac_\Lag$ can also be represented
on the underlying Hilbert spaces. 
In more detail, denoting by $(\pi_0, \Hil_0)$
a representation $\pi_0 : \Ac_{\Lag_0} \rightarrow \ \BHil{0}$ 
on a Hilbert space $\Hil_0$, one obtains a representation
$(\pi, \Hil_0)$ of $\Ac_\Lag$ on $\Hil_0$, putting 
$\pi \doteq \pi_0 \circ \gamma$,
where $\gamma$ is the injective homomorphism given in equation \eqref{e4.3}.

\medskip 
According to Theorem \ref{t4.1}, the algebra $\Ac_{\Lag_0}$ contains
operators which can be interpreted as exponentials of the position and
momentum operators $\bQi, \bPi$, subject to the free time evolution
$t \mapsto \bQi(t) = \bQi + t \bPi$. We therefore proceed to the Schr\"odinger 
representation of the canonical commutation relations on the
Hilbert space $\Hil_S$,
where we keep the notation $\bQi, \bPi$ for the concrete multiplication
and differential operators. On $\Hil_S$ we consider for any given functional 
$F \in \Fc$, cf.\ equation \eqref{e2.1},  the operator function 
\[
  t \mapsto F(\bQi(t)) =
  e^{\, i t (1/2 \bPi^2} F(\bQi) e^{\, - i t (1/2 \bPi^2} \, ,
\]
\ie we replace in the functional the classical motions
$t \mapsto \bxi(t) = \bxi_0 + t \dot{\bxi}_0$ 
by their quantum counterpart. The adjoint action of the
unitaries $t \mapsto  e^{\, i t (1/2) \, \bPi^2}$, involving the
free Hamiltonian, induces these time translations on $\Hil_S$. 
Because of the linear
terms appearing in $F$, the resulting operators are in general unbounded,
but the operators are densely defined on $\Hil_S$. 
It is our goal to construct the time-ordered exponentials of 
the integrated operator functions, formally given by 
\[
T(F) \doteq T \, exp \Big( i \int_{-\infty}^\infty \! dt \, F(\bQi(t)) \Big) \, ,
\]  
where $T$ denotes time ordering. This will be
accomplished in several steps.

\medskip
We begin by  considering the cases where the functions 
$t \mapsto F(\bxi(t))$, $\bxi \in \Cc$, are
uniformly bounded. Then the above operator
function is bounded and continuous in the
strong operator topology on $\Hil_S$, $t \in \RR$. Its time-ordered
exponential is given by the Dyson expansion \cite{ReSi}
\[
  T(F) =
  1 + \sum_{k=1}^\infty i^k \int_{-\infty}^\infty \! dt_1 \int_{-\infty}^{t_1}
  \! dt_2 \dots \int_{-\infty}^{t_{k-1}} \! dt_k \ 
    F(\bQi(t_1)) \  F(\bQi(t_2)) \cdots F(\bQi(t_k)) \, .
\]  
The integrals are defined in the strong operator topology 
and the series converges absolutely in norm since the
operator functions are bounded and have compact support.

\medskip
Next, let $F_1, F_2 \in \Fc$ be bounded functionals such that 
the support of $F_1$ lies in the future of~$F_2$. Then 
$t, s \mapsto F_1(\bQi(t)) \, F_2(\bQi(s)) 
= F_2(\bQi(s)) \, F_1(\bQi(t))  = 0$ for $s \geq t$.
Let $t_0$ be a point in time such that $F_1$ lies in its 
future and $F_2$ in its past.
This implies after a moments reflection that 
\begin{align*}
& T(F_1 + F_2) =  1 +
\sum_{n=1}^\infty i^n
\sum_{k + l = n}^\infty \\ 
& \int_{t_0}^\infty \! \! \! dt_1 
\dots \int_{t_0}^{t_{k-1}} \! \! \! dt_k \, 
\int_{-\infty}^{t_0} \! \! \! ds_1 
\dots \int_{-\infty}^{t_{k-1}} \! \! \! ds_l \ 
F_1(\bQi(t_1))  \cdots F_1(\bQi(t_k)) \ 
F_2(\bQi(s_1)) \cdots F_2(\bQi(s_l)) \\[1mm] 
& = T(F_1) T(F_2) \, .
\end{align*}
Now, with $F_1, F_2$ as before, let $F_3$ be an arbitrary 
bounded functional. We decompose $F_3$  
sharply into $F_3 = F_{3 +} + F_{3 -}$ such 
that $F_{3 +}$ has its support in the future of $F_2$, $F_{3 -}$, 
and $F_{3 -}$ in the past of~$F_1$, $F_{3 +}$. Note that this 
sharp decomposition does not cause any problems since the
respective time ordered integrals are well defined. 
Then, according to the preceding result, 
\begin{align*}
T(F_1 + F_2 + F_3) & = T(F_1 + F_{3 +}) T(F_2 + F_{3 -}) \\ 
& =  T(F_1 + F_{3 +}) T(F_{3 -}) T(F_{3 -})^{-1} T(F_{3 +})^{-1}
T(F_{3 +})  T(F_2 + F_{3 -}) \\
& = T(F_1 + F_3) T(F_3)^{-1} T(F_2 + F_3) \, . 
\end{align*}
This is the causal factorization relation, 
anticipated in the abstract setting. We note that the 
constant functionals $F_{\textrm{h}}$ can be realized by 
choosing a function $t \mapsto \textrm{h}(t)$
which satisfies $\int \! dt \,  \textrm{h}(t) = \textrm{h}$
and has arbitrary support, \eg in the complement of any 
other given functional. Plugging the functional
$t \mapsto F_{\textrm{h}}(\bQi(t)) \doteq \textrm{h}(t)$
into the definition of the time-ordered operators,
one obtains 
$T(F) T(F_{\textrm{h}}) = T(F + F_{\textrm{h}}) = T(F_{\textrm{h}}) T(F) $.

\medskip
In order to extend the operators $T(F)$ to all 
functionals in $\Fc$, we consider now for given
loop $\bfi_0 \in \Cc_0$ the linear operator functions
$t \mapsto L_{\bfi_0}(\bQi(t)) \doteq \bfi_0(t) \bQi(t)$.
These operators are unbounded. But since they are 
linear combinations of the position and momentum 
operators, all of their products have as common dense 
domain $\Dom_S \subset \Hil_S$ the span of the eigenvectors of the 
Hamiltonian $(\bPi^2 + \bQi^2)$ of the harmonic
oscillator. Products of the time translated operators
act continuously in time 
on the eigenvectors. As a matter of fact, since the 
loop functions have compact support, the Dyson expansion
exists on each member of $\Dom_S$ and converges 
pointwise in the strong topology to the time-ordered unitary 
exponential $T(L_{\bfi_0})$.  
That operator can also be constructed by solving the differential equation 
\[
{\frac{d}{dt}} 
T(L_{\bfi_0})(t) = \bfi_0(t) \bQi(t) \ T(L_{\bfi_0})(t) \, ,
\quad T(L_{\bfi_0})(t_p) = 1 \, ,
\]
where $t_p$ lies in the past of $\bfi_0$. For any time
$t_f$, lying in its future, one then has $T(L_{\bfi_0}) = T(L_{\bfi_0})(t_f)$.
Making use of the Heisenberg commutation relation, the
equation can be solved by standard computations, giving  
\[
T(L_{\bfi_0}) = W(\bfi_0) \ e^{\, -(i/2) \langle \bfi_0, \Delta_D  \bfi_0 \rangle }
\, , \quad \bfi_0 \in \Cc_0 \, .
\]
Here \ $W(\bfi_0) = e^{\, i \int \! dt \, \bfi_0(t) \bQi(t)}$ 
and $\Delta_D$ is the propagator of the differential operator
$K$, defined in Sec.~2. Putting 
\ $\bxi \mapsto F_{\bfi_0}[\bxi] \doteq 
L_{\bfi_0}[\bxi] + (1/2) \langle \bfi_0, \Delta_D  \bfi_0 \rangle$, 
one obtains \mbox{$T(F_{\bfi_0}) = W(\bfi_0)$}, in accordance with the 
definition \eqref{e4.2} in Sec.~4.

\medskip
Trying to extend the time-ordered exponentials by a Dyson expansion
to arbitrary functionals in $\Fc$ would fail due to domain
problems. We therefore determine these operators indirectly by relying on 
the preceding results. Making use of the canonical commutation
relations, we obtain on the domain $\Dom_S$ the equality 
$W(\bfi_0) \bQi(t) W(\bfi_0)^{-1} = \bQi(t) + (\Delta \bfi_0)(t)$,  
$t \in \RR$,  where $\Delta$ is the commutator function, defined in 
Sec.\ 2. Making use of the Dyson expansion, 
it follows that for any bounded functional $F \in \Fc$ one has
\begin{equation} \label{e5.1} 
W(\bfi_0) T(F) W(\bfi_0)^{-1} = T(F^{\Delta \bfi_0}) \, , \quad
\bfi_0 \in \Cc_0 \, .
\end{equation}
This implies 
that for any two loop functions $\bfi_{0}, \bgi_{0} \in \Cc_0$
and bounded functionals $F, G \in \Fc$ 
\begin{align*}
& T(F) W(\bfi_{0}) T(G) W(\bgi_{0}) = 
T(F) T(G^{\Delta \bfi_{0}})  W(\bfi_{0})  W(\bgi_{0}) \\
& = T(F) T(G^{\Delta \bfi_{0}}) W(\bfi_{0} + \bgi_{0}) 
e^{- (i/2) \langle  \bfi_{0}, \Delta \bgi_{0} \rangle } \, . 
\end{align*}
We define now for any loop $\bfi_0$ and bounded functional 
$F$ the unitary operators 
\begin{equation} \label{e5.2} 
\overline{T}(L_{\bfi_0} + F) \doteq T(F^{- \Delta_A \bfi_0}) \, T(L_{\bfi_0}) \, ,
\quad \bfi_0 \in \Cc_0, \ F \in \Fc \, ,
\end{equation}
where $\Delta_A$ is the advanced propagator defined in Sec.~2. 
This ansatz is suggested by a similar relation obtained 
in the framework of \cite[Sec.\ 4]{BuFr}. 
Since any functional in $\Fc$ can uniquely be decomposed into its
bounded and  unbounded parts, these operators are well-defined. 
In order to see that they have the properties of time-ordered 
exponentials, let $(L_{\bfi_{0}} + F)$ lie in the future of
$(L_{\bgi_{0}} + G)$. Then 
\begin{align*}
& \overline{T}(L_{\bfi_{0}} + F) \, \overline{T}(L_{\bgi_{0}} + G)
= T(F^{- \Delta_A \bfi_{0}}) \, T(L_{\bfi_{0}})  \ 
T(G^{- \Delta_A \bgi_{0}}) \, T(L_{\bgi_{0}}) \\
& = T(F^{- \Delta_A \bfi_{0}}) T(G^{- \Delta_A \bgi_{0} + \Delta \bfi_{0}}) \,
T(L_{\bfi_{0}}) T(L_{\bgi_{0}}) \, ,
\end{align*}
where we made use of the fact that the operators $T(L_{\bfi_0})$
and $W(\bfi_0)$ differ only by a phase factor and of relation \eqref{e5.1}. 
Now the shift of functionals by loop functions does not affect their 
localization properties, so we can apply the preceding results about
the causal factorization of the restriction of $\overline{T}$ to 
the bounded, respectively linear functionals, giving
\begin{align*}
T(F^{- \Delta_A \bfi_{0}}) & T(G^{- \Delta_A \bgi_{0} 
+ \Delta \bfi_{0}}) \,
T(L_{\bfi_{0}}) T(L_{\bgi_{0}}) \\ 
& =  T(F^{- \Delta_A \bfi_{0}} + G^{- \Delta_A \bgi_{0} + 
\Delta \bfi_{0}}) \, T(L_{\bfi_{0}} + L_{\bgi_{0}}) \, .
\end{align*}
Since $\bgi_{0}$ has support in the past of $F$,
this holds also for $t \mapsto (\Delta_A \bgi_{0})(t)$.
Similarly, since $\bfi_{0}$ has support in the future of
$G$, this is also true for $t \mapsto (\Delta_R \bfi_{0})(t)$.
Thus, bearing in mind that $\Delta = \Delta_R - \Delta_A$,
it follows from the Dyson expansion of the time-ordered exponentials
that 
\[
T(F^{- \Delta_A \bfi_{0}} + G^{- \Delta_A \bgi_{0} + 
\Delta \bfi_{0}}) = 
T( (F + G)^{-\Delta_A (\bfi_{0} + \bgi_{0})}) \, .
\]
Since $L_{\bfi_{0}} + L_{\bgi_{0}} = L_{(\bfi_{0} + \bgi_{0})}$, this proves
that for any pair of such time-ordered functionals one obtains 
the causal factorization relation 
\[
\overline{T}(L_{\bfi_{0}} + F) \overline{T}(L_{\bgi_{0}} + G) 
= \overline{T}(L_{\bfi_{0}} + L_{\bgi_{0}} + F + G) \, .
\]
By the same argument as in case of bounded functionals, one can show
then that the time-ordered 
exponentials  $\overline{T} : \Fc \rightarrow 
\BHil{S}$ also satisfy the non-linear causality
relation given in the definition of the dynamical groups in Sec.~3. 

\medskip
It remains to show that these operators also satisfy 
the dynamical relations for the given Lagrangean $\Lag_0$.
Picking any loop $\bxi_0 \in \Cc_0$ and functional 
$F \in \Fc$, it follows from the definition of 
the time-ordered operators and the action of phase 
factors on them that 
$\overline{T}(F^{\Delta_A \bxi_0}  + F_{\bxi_0}) = \overline{T}(F) 
\overline{T}(F_{\bxi_0})$. Here $F_{\bxi_0}$ is the functional
defined in equation \eqref{e4.1} for which one has, as was shown thereafter,
$F_{K\bxi_0} = \delta \Lag_0(\bxi_0)$. 
Since $\overline{T}(F_{K\bxi_0}) = W(K\bxi_0) = 1$
and $\Delta_A K \bxi_0 = \bxi_0$, we have thus arrived at
the dynamical relation
\[
\overline{T}(F^{\bxi_0}  + \delta \Lag_0(\bxi_0)) = \overline{T}(F) \, ,
\quad F \in \Fc \, , \ \bxi_0 \in \Cc_0 \, .
\]

\medskip
We conclude that the unitary operators  
$\overline{T} : \Fc \rightarrow \BHil{S}$, defined in \eqref{e5.2}, 
satisfy all relations, characterizing the generating elements of the 
dynamical group $\Group_{\Lag_0}$. So we obtain a representation
of this group on $\Hil_S$, denoted by $\pi_s$. It is fixed by 
the relations 
\begin{equation} \label{e5.3}
\pi_S(S_{\Lag_0}(F)) \doteq \overline{T}(F) \, , \ \ 
\pi_S(S_{\Lag_0}(F)^{-1}) \doteq \overline{T}(F)^{-1} \, ,
\quad F \in \Fc \, , 
\end{equation}
and extends to the norm dense span of the group elements in $\Alg_{\Lag_0}$
by linearity. Since we have equipped this span with the maximal 
C*-norm, it also follows that the representation $\pi_S$ extends by 
continuity to $\Alg_{\Lag_0}$. 

\medskip
It is noteworthy that the representation $\pi_S$  has 
significant continuity properties. We say that a representation $\pi$ of
$\Alg_{\Lag_0}$ is regular if the functions 
$c \mapsto \pi(S_{\Lag_0}(c F))$, \mbox{$c \in \RR$}, are continuous
in the strong operator topology for all $F \in \Fc$. 
That this is the case for the representation $\pi_S$ follows 
for bounded functionals from the Dyson expansion, and for 
linear functionals from well-known properties of the resulting
Weyl operators. The statement  for arbitrary functionals is then 
a consequence of relation~\eqref{e5.2}. Based on the 
results in this section, the following theorem obtains.
\begin{theorem} \label{t5.1}
Let $\Alg_{\Lag_0}$ be the dynamical algebra fixed by 
the non-interacting Lagrangean $\Lag_0$. 
This algebra is represented in the Schr\"odinger 
representation of the position and momentum operators 
$\bQi, \bPi$ by the pair $(\pi_S, \Hil_S)$, where the action of the 
morphism $\pi_S$ on the generating elements of $\Alg_{\Lag_0}$
is given by relation \eqref{e5.3}. One has
\begin{itemize}
\item[(i)] the representation $(\pi_S, \Hil_S)$ is irreducible and 
regular.
\item[(ii)] For any Lagrangean $\Lag$ of the form \eqref{e2.3},
the corresponding dynamical algebra $\Alg_\Lag$ is represented on 
$\Hil_S$ by $\pi \doteq \pi_S \circ \gamma$, where $\gamma$ is the 
injective homomorphism
defined in equation \eqref{e4.3}, mapping $\Alg_\Lag$ into
$\Alg_{\Lag_0}$. The representation $\pi$ is irreducible and
regular; this applies also to its restrictions
$\pi \upharpoonright \Alg_\Lag(\II)$ for any given time interval~$\II$ 
with open interior. 
 \end{itemize}
\end{theorem}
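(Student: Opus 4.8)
The groundwork laid before the statement already shows that the operators $\overline{T}$ of \eqref{e5.2} satisfy all defining relations of the dynamical group $\Group_{\Lag_0}$, so that \eqref{e5.3} indeed defines a representation $\pi_S$ of $\Alg_{\Lag_0}$, and that this representation is regular. What remains is therefore irreducibility, together with the transport of both properties through the embedding $\gamma$. The plan is to reduce everything to a single local statement: for every interval $\II$ with open interior, the restriction $\pi_S \upharpoonright \Alg_{\Lag_0}(\II)$ is already irreducible on $\Hil_S$.

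To prove this local irreducibility, I would exhibit the concrete Weyl operators inside the image. For $\bfi_0 \in \Cc_0$ one has $\pi_S(S_{\Lag_0}(F_{\bfi_0})) = \overline{T}(F_{\bfi_0}) = W(\bfi_0) = e^{\, i \langle \bfi_0, \bQi \rangle}$, and since $\bQi(t) = \bQi + t\bPi$ the exponent reads $\langle \bfi_0, \bQi \rangle = a \cdot \bQi + b \cdot \bPi$ with $a = \int \! dt\, \bfi_0(t)$ and $b = \int \! dt\, t\, \bfi_0(t)$ the zeroth and first time-moments of $\bfi_0$. The decisive observation is that for a loop supported in an arbitrary nonempty open interval these two moments can be prescribed independently and arbitrarily in $\RR^{sN}$; hence, as $\bfi_0$ ranges over loops with $\supp \bfi_0 \subset \II$, the operators $W(\bfi_0)$ run through the entire Schr\"odinger Weyl system $\{ e^{\, i(a \cdot \bQi + b \cdot \bPi)} : a,b \in \RR^{sN} \}$. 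By the Stone--von Neumann theorem this system acts irreducibly on $\Hil_S$, so the commutant of $\pi_S(\Alg_{\Lag_0}(\II))$ is trivial. Taking $\II = \RR$ yields the irreducibility asserted in~(i); the regularity in~(i) is the statement already established above.

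For part~(ii) I would transport these properties through $\gamma$. By equation~\eqref{e4.3}, on each local algebra $\Alg_\Lag(\II)$ the map $\gamma$ coincides with $\text{Ad}$ of a fixed unitary $U_\II$ (the product $U_{\chi_n,\chi_{n-1}} \cdots U_{\chi_2,\chi_1}$ for $n$ large) composed with the canonical identification $S_\Lag(F) \mapsto S_{\Lag_0}(F)$, $F \in \Fc(\II)$. Consequently $\pi \upharpoonright \Alg_\Lag(\II)$ is unitarily equivalent to $\pi_S \upharpoonright \Alg_{\Lag_0}(\II)$ via the fixed unitary $\pi_S(U_\II)$. Choosing $\II$ with open interior, the latter is irreducible by the local statement, hence so is the former; and since irreducibility of the restriction to a subalgebra forces the commutant of the whole algebra to be trivial, $\pi$ itself is irreducible. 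Regularity passes through the same equivalence: for $F \in \Fc(\II)$ the function $c \mapsto \pi(S_\Lag(cF))$ equals $\text{Ad}\,\pi_S(U_\II)^{-1}$ applied to $c \mapsto \overline{T}(cF)$, which is strongly continuous by the regularity of $\pi_S$, and conjugation by a fixed unitary preserves strong continuity; as every functional is supported in some such $\II$, this gives regularity of $\pi$ and of all its local restrictions.

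I expect the main obstacle to be the local irreducibility step, and within it the moment argument: one must be sure that restricting the loops to a bounded interval with open interior does not shrink the available Weyl operators. The point is precisely that the free time evolution $\bQi(t) = \bQi + t\bPi$ couples position and momentum to the zeroth and first time-moments of the loop, both of which remain freely adjustable by test functions on any nonempty open interval. Once this is secured, the remainder is a routine transport of irreducibility and strong continuity along the fixed local unitaries supplied by~\eqref{e4.3}, together with the Stone--von Neumann theorem.
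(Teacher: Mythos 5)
Your proposal is correct and follows essentially the same route as the paper: irreducibility comes from the Weyl operators contained in the (local) image algebras, and part (ii) is obtained by transporting irreducibility and regularity through the fixed unitaries implementing $\gamma$ via relation \eqref{e4.3}, with regularity unaffected since neither scaling of functionals nor conjugation by a fixed unitary disturbs support or strong continuity. The only difference is one of detail: you spell out the moment argument showing that loops supported in any interval with open interior already exhaust the full Schr\"odinger Weyl system, a point the paper asserts by appealing to Theorem \ref{t4.1}.
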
 
\begin{proof}
(i) Since $\pi_s(\Alg_{\Lag_0})$ contains the unitary exponentials of the
position and momentum operators (Weyl operators), the 
representation $\pi_S$ is irreducible.
Its regularity is a consequence of the regularity properties 
of the representing operators $\overline{T}$, as explained above. 
(ii) Let $\Fc(\II)$ be the functionals 
having support in any given interval~$\II$. They determine 
a corresponding subalgebra $\Alg_{\Lag}(\II) \subset \Alg_{\Lag}$.
Identifying the generating unitaries of 
$\Alg_{\Lag}(\II)$ and $\Alg_{\Lag_0}(\II)$, it follows from
relation \eqref{e4.3} that the algebras 
$\pi(\Alg_{\Lag}(\II))$ and $\pi_0(\Alg_{\Lag_0}(\II))$ 
on $\Hil_S$ are unitarily 
equivalent. Since the group $\Group_{\Lag_0}(\II)$ contains the
Weyl group, cf.\ Theorem \ref{t4.1}, 
The algebra $\pi_S(\Alg_{\Lag_0}(\II))$ acts irreducibly on 
$\Hil_S$, so this is also true for 
$\pi(\Alg_{\Lag}(\II))$, proving
the irreducibility of these representations. Since the 
scaling of functionals $c \mapsto c F$, $c \in \RR$, does not
affect their support, it also follows from relation \eqref{e5.3}
that all functions $c \mapsto \pi(S_{\Lag}(c F))$, $F \in \Fc$, are continuous
in the strong operator topology  on $\Hil_s$, completing the proof.
\end{proof}

\medskip
Let us mention that it is not known whether the 
representation $(\pi_s, \Hil_s)$ of $\Alg_{\Lag_0}$ is 
faithful. An affirmative answer would be of interest since it would 
imply, by an application of the Stone-von Neumann theorem,  
that $(\pi_s, \Hil_s)$ is, up to equivalence, the unique regular,
irreducible representation of $\Alg_{\Lag_0}$.

\section{Operations and probabilities}
\setcounter{equation}{0}

Temporary operations, which 
are performed on physical systems, are the primary ingredients 
in our setting. The concept of observable was not used until now. 
So there arises the question of whether one can recover from our 
present point of view the standard statistical interpretation of 
quantum physics in terms of observables in an operationally meaningful 
manner. 

\medskip  
In order to discuss this 
issue, let $\Alg_\Lag$ be a dynamical algebra  in a
representation $(\pi, \Hil)$, having properties
established in the preceding theorem. The normalized vectors 
in $\Hil$ are denoted by $\Omega$ and the corresponding vector states 
on~$\Alg_\Lag$ are given by 
$\omega(\, \cdot \,) = \langle \Omega, \,\pi( \cdot )\Omega \rangle$.  
The operations which can be performed on an 
ensemble, described by a vector state 
$\omega$, correspond to maps 
$\omega \mapsto \omega_S \doteq \omega \, \circ \Ad{S^{-1}}$, where 
$S \in \Group_S$ and $\omega_S$ is fixed by the ray of
$\pi(S) \, \Omega$. So the transition probability between the initial
and final state 
is given by $\omega \cdot \omega_S \doteq |\omega(S)|^2$. 

\medskip 
It is not clear which portion of the state space can be reached by 
the action of $\Group_S$ on a given state. In view of the fact that 
this group describes an abundance of perturbations,
it seems possible that it acts (almost) transitively, 
\ie that its range on any given vector state is norm 
dense in the set of all vector states. 
If this is not the case, one may rely on the superposition principle 
and proceed to linear combinations 
$\overline{S} = \sum c_k S_k$ of operations, which are norm dense
in the unitaries of $\Alg_{\Lag_0}$. It then follows from Kadison's
transitivity theorem \cite{Ka1} that this extended unitary group 
acts transitively on all vector states. 
So transition probabilities between pure states 
can be determined by operations, \ie without having to rely 
on the existence of minimal projections.

\medskip
We will show next that there exist operations which, when acting on 
given ensembles, produce vector states with prescribed properties, which  
are described by a projection. We will restrict our attention here to 
projections of infinite dimension since 
this covers the important case of observables having continuous spectrum.
The respective probabilities that members of the original ensembles 
have the given properties is likewise encoded in transition 
amplitudes, as defined in the preceding step. To be precise, 
this  holds true only  up to some given, arbitrarily small error.

\medskip 
Since the representation
$(\pi, \Hil)$ is irreducible, \ie $\pi(\Alg_\Lag)'' = \BHil{{ }}$, 
we can extend the vector states $\omega$  on $\Alg_\Lag$
to its weak closure $\Alg_\Lag^-$
with regard to the  weak operator topology, determined 
by the representation. We then have the following result, 
where we make us of arguments in \cite{BuSt}.

\begin{theorem}
Let $\Hil_N \subset \Hil$ be any finite dimensional subspace, 
let $E \in \Alg_\Lag^-$ be any infinite dimensional projection, 
and let $\varepsilon > 0$.
There exists a unitary operator \mbox{$S_{\varepsilon} \in \Alg_\Lag$} 
such that for any vector state $\omega$, given by a 
vector $\Omega \in \Hil_N$, its image $\omega_{S_\varepsilon}$
under the operation $S_{\varepsilon}$ satisfies 
\[
\omega_{S_\varepsilon}(1 - E) < \varepsilon  
\quad \text{and} \quad | \omega \cdot \omega_{S_\varepsilon} - \omega(E)^2 |
< \varepsilon \, .
\]
\end{theorem}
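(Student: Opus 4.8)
The plan is to reduce the statement to a geometric construction of an ideal unitary in the weak closure $\Alg_\Lag^- = \pazocal{B}(\Hil)$ (available since $(\pi,\Hil)$ is irreducible) and then to pull this unitary back into $\Alg_\Lag$ by Kadison's transitivity theorem \cite{Ka1}, in the spirit of the arguments in \cite{BuSt}. Throughout I use that the transition probability is $\omega\cdot\omega_{S_\varepsilon} = |\omega(S_\varepsilon)|^2 = |\langle\Omega,\pi(S_\varepsilon)\Omega\rangle|^2$.

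First I would build a unitary $U \in \Alg_\Lag^-$ realizing both requirements \emph{exactly} on $\Hil_N$. Let $P_N$ be the projection onto $\Hil_N$ and $T \doteq P_N E P_N$, a positive contraction on the finite-dimensional space $\Hil_N$; diagonalize it as $T f_k = \lambda_k f_k$ with an orthonormal basis $\{f_k\}$ of $\Hil_N$ and $\lambda_k \in [0,1]$, noting $\|E f_k\|^2 = \langle f_k, E f_k\rangle = \lambda_k$. Because $E\Hil$ is infinite-dimensional while $\Hil_N$ is finite-dimensional, I can select orthonormal vectors $w_k \in E\Hil \cap \Hil_N^\perp$ that are in addition orthogonal to every $E f_j$. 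Defining the linear map $V : \Hil_N \to E\Hil$ by $V f_k \doteq E f_k + (1-\lambda_k)^{1/2} w_k$, a direct computation (using $\langle E f_j, E f_k\rangle = \lambda_k \delta_{jk}$ and the orthogonality of the $w_k$) shows that $V$ is an isometry with range in $E\Hil$ and with $P_N V = T$. Consequently, for every unit vector $\Omega \in \Hil_N$ one has $V\Omega \in E\Hil$ and $\langle\Omega, V\Omega\rangle = \langle\Omega, E\Omega\rangle$. Since $\Hil_N$ and $V\Hil_N$ are finite-dimensional, their orthogonal complements have equal dimension, so $V$ extends to a unitary $U \in \pazocal{B}(\Hil) = \Alg_\Lag^-$. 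For the vector state $\omega$ of any unit $\Omega \in \Hil_N$ this yields the exact identities $\langle U\Omega, (1-E)U\Omega\rangle = 0$ and $|\langle\Omega, U\Omega\rangle|^2 = \omega(E)^2$.

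The main step is then to replace the ideal $U$ by an operation from $\Alg_\Lag$ itself. Since $(\pi,\Hil)$ is irreducible, Kadison's transitivity theorem \cite{Ka1} supplies a unitary $S_\varepsilon \in \Alg_\Lag$ with $\|(\pi(S_\varepsilon) - U) f_k\| < \delta$ for all $k$; by linearity and finite-dimensionality of $\Hil_N$ this upgrades to $\|(\pi(S_\varepsilon) - U)\Omega\| \le \sqrt{N}\,\delta$ uniformly for unit vectors $\Omega \in \Hil_N$. I would then propagate this norm estimate into the two quadratic quantities in the statement. Writing $a = \pi(S_\varepsilon)\Omega$ and $b = U\Omega$, both unit vectors, and using $|\langle a, Ba\rangle - \langle b, Bb\rangle| \le \|B\|(\|a\|+\|b\|)\|a-b\|$ with $B = 1 - E$ (so $\|B\| \le 1$), the vanishing of $\langle U\Omega, (1-E)U\Omega\rangle$ gives $\omega_{S_\varepsilon}(1-E) \le 2\sqrt N\,\delta$. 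Likewise $|\langle\Omega, \pi(S_\varepsilon)\Omega\rangle - \omega(E)| \le \sqrt N\,\delta$, together with $\omega(E) \in [0,1]$, controls $\big||\langle\Omega,\pi(S_\varepsilon)\Omega\rangle|^2 - \omega(E)^2\big| \le 2\sqrt N\,\delta$, i.e.\ $|\omega\cdot\omega_{S_\varepsilon} - \omega(E)^2| \le 2\sqrt N\,\delta$. Choosing $\delta < \varepsilon/(2\sqrt N)$ makes both requirements hold simultaneously for every $\Omega \in \Hil_N$.

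The step I expect to demand the most care is precisely this passage from $\Alg_\Lag^-$ to $\Alg_\Lag$: one must ensure that transitivity can be arranged with a genuine \emph{unitary} of the C*-algebra and that a \emph{single} $S_\varepsilon$ serves the whole unit sphere of $\Hil_N$ at once — this is why the finite-dimensionality of $\Hil_N$ is essential, as it turns approximation on a basis into uniform approximation. By contrast, the infinite-dimensionality of $E$ is used only to produce the auxiliary vectors $w_k$, which is exactly what lets the isometry $V$ take values in $E\Hil$ while keeping its compression to $\Hil_N$ equal to $T$; this is where a finite-dimensional $E$ would obstruct the construction, explaining the hypothesis on $E$.
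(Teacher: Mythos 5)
Your proof is correct, and its core construction is genuinely different from the paper's. The paper never produces an exact solution in $\Alg_\Lag^-$: it first proves that every isometry $V$ with range projection $E$ is a strong limit of unitaries $U_k = VE_k + W_k$ (finite-rank cutoffs padded by defect-space partial isometries), then uses the weak convergence $VU_k^* \rightharpoonup VV^* = E$ to obtain an isometry $V''$ with range exactly $E$ whose expectation values on $\Hil_N$ approximate those of $E$ up to $\varepsilon/4$, then approximates $V''$ on $\Hil_N$ by a unitary $U$, and only then applies Kadison transitivity — so the error $\varepsilon$ is intrinsic to the construction and spread over two approximation steps. You instead build, by purely finite-dimensional linear algebra (diagonalizing the compression $T = P_N E P_N$ and padding with orthonormal vectors $w_k \in E\Hil$ orthogonal to $E\Hil_N$), an isometry $V : \Hil_N \rightarrow E\Hil$ with $P_N V = T$, and extend it to a global unitary $U$ by counting dimensions of complements; this $U$ satisfies both requirements \emph{exactly} on the whole unit sphere of $\Hil_N$. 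Your route buys three things: it is more elementary (no sequences, no weak limits); it needs only $\dim E\Hil \geq 2N$ rather than infinite dimensionality (so your closing remark that a finite-dimensional $E$ would obstruct your construction is not quite accurate — only a too-small $E$ would); and, had you invoked the exact form of Kadison's transitivity theorem — which supplies a unitary $S_\varepsilon \in \Alg_\Lag$ with $\pi(S_\varepsilon) \upharpoonright \Hil_N = U \upharpoonright \Hil_N$, and is in fact the form the paper itself uses — you would get the stronger conclusion $\omega_{S_\varepsilon}(1-E) = 0$ and $\omega \cdot \omega_{S_\varepsilon} = \omega(E)^2$ exactly; your approximate transitivity step and the $\sqrt{N}\,\delta$ perturbation estimates are correct but strictly unnecessary. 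What the paper's longer route buys in exchange is an auxiliary fact of independent interest, tied to the discussion of primitive observables in \cite{BuSt}: isometries with range projection $E$ are strong limits of unitaries, and $\langle \Omega, E\Omega\rangle$ is the largest expectation value attainable among such isometries.
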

\begin{proof}
In a first step we show that any isometry $V \in \BHil{{ }}$
with range projection $E$ can be approximated by a series of
unitary operators in the strong operator topology. Let 
$E_k \in  \BHil{{ }}$, $k \in \NN$, 
be an increasing sequence of finite dimensional
projections which converges to $1$ in the strong operator
topology. Putting $V_k = V E_k$, the projections
$(1 - V_k V_k^*)$ and $(1- V_k^* V_k) = (1 - E_k)$ have infinite dimension.
So there exist partial isometries $W_k \in \BHil{{ }}$ such that 
$W_k W_k^* = (1 - V_k V_k^*)$ and $W_k^* W_k = (1 - E_k)$.
It follows that $(V_k + W_k)^* (V_k + W_k) = (V_k + W_k) (V_k + W_k)^* = 1$,
which shows that the sums 
$U_k \doteq (V_k + W_k)$ are unitary operators, $k \in \NN$. It is 
also apparent that $U_k$ converges to $V$ in the strong operator 
topology in the limit of large $k$. 

\medskip
According to the preceding step, the isometries 
$V'_k \doteq V U_k^*$, $k \in \NN$,  have the common range projection $E$.
Since $U_k$ converges strongly to $V$ in the limit of large $k$,
these isometries  
converge in the weak operator topology to $V V^* = E$.
So in particular 
$\lim_k \langle \Omega, V'_k \Omega \rangle = \langle \Omega, E \Omega 
\rangle$. Let us mention 
as an aside that this is the largest \mbox{positive} expectation value 
which can be reached by isometries $V'$ with range projection~$E$. 
It shows,  since $\Hil_N$ is finite dimensional, that there exists 
some isometry $V''$ such that 
$|\langle \Omega, V'' \Omega \rangle - 
\langle \Omega, E \Omega \rangle | < \varepsilon /4 $
for all normalized vectors $\Omega \in \Hil_N$.
Moreover, according to the first step, there exists a  unitary 
operator $U$ for which one has 
$\| (U - V'') \, \Omega \| < \varepsilon / 4$.
Combining these estimates one obtains the bounds
$\big| | \langle \Omega, U \Omega \rangle|^2 - 
\langle \Omega, E \Omega \rangle^2 \big| < \varepsilon$
and $\langle U \Omega, (1-E) U \Omega \rangle < \varepsilon$. 
Since the unitary operator~$U$ in these relations acts on 
vectors in the finite dimensional space $\Hil_N$, it can be replaced 
according to Kadison's transitivity theorem 
by some operator $\pi(S_\varepsilon)$,
where $S_\varepsilon \in \Alg_L$ is unitary. 
The statement then follows from the definition of the
perturbed states $\omega_{S_\varepsilon}$. 
\end{proof}

This theorem shows that for any given property, described by an
infinite projection $E$, and any finite dimensional set of 
vector states $\omega$ 
there exists some unitary operator $S$, interpreted as an  
operation, which has two fundamental 
properties: first, the probability that a state $\omega$
has the property $E$ can be determined from the square root of the
transition probability $\omega \cdot \omega_S$ between the 
state before and after the operation. Second, 
the states $\omega_S$ after the operation have 
property $E$ with arbitrary precision. This holds true without having to rely
on a subjective process of state reduction. So $S$ exactly 
describes what one would expect from a well-designed experiment,
measuring $E$. For this reason unitary operations $S$ with these properties
were called \textit{primitive observables} in \cite{BuSt},  
where the term ``primitive'' implies that they are basic.
In that reference it is also discussed how observables 
composed of orthogonal projections can be determined in a similar manner. 

\medskip
Having seen that the conventional interpretation of quantum mechanics
can be recovered by relying on the concept of operations, let us 
also comment on the second ingredient in our approach: time.
Since, from a macroscopic point of view, the arrow of time is 
unquestionable, it also enters in microphysics since realistic operations
can only be performed one after the other. It is impossible to 
make up for missed operations in the past. This motivated us to 
take the time ordering of operations as a fundamental ingredient
in our approach. In spite of the fact that the number of
available operations decreases in the course of time, this does 
not mean that the information which one can gather by using them
also diminishes. This may be seen from Theorem \ref{t5.1} 
according to which the operations localized in any time interval~$\II$ 
are irreducible in the representations of interest. 
Thus, the repetition of experiments, determining a particular 
property $E$, say, can be described in our setting by operations at any 
instant of time. So our approach provides a fully consistent 
description of quantum mechanics. 

\medskip
Let us mention in conclusion that in relativistic quantum physics,
described by quantum field theory, operations are 
primary ingredients of the theory, as well \cite{BuFr}. Yet instead of 
considering operations in ordered time slices, one has to consider 
there operations in future directed light cones, which are 
partially ordered. It was shown in \cite{BuRo}, that this point of view
leads to a consistent interpretation of the theory. The fact 
that the algebras generated by operations in lightcones are in 
general not irreducible was discussed in \cite{BuSt} and led to the 
concept of primitive observables, used also here.

\section{Summary}
\setcounter{equation}{0}

In this article we have presented an approach to quantum 
mechanics which is entirely based on concepts and facts taken from the 
``classical world''. We proceded from classical mechanics, thinking
of the configuration space of a finite number of particles and of their
motions (orbits). These motions are governed by a given Lagrangean and
the corresponding action. The particles can be perturbed by 
forces, described by functionals involving quite arbitrary 
potentials and some information as to when and
for how long these perturbations act. 

\medskip
We then went on and represented this structure by some dynamical
group, aiming to desribe the effects of perturbations on 
the underlying system. Its generating elements are labelled by 
the functionals, describing the perturbations. Their inverses 
represent the idea that in finite systems it is possible to 
remove the effects of a perturbation by other suitable perturbations.
The dynamics entered into the group by saying how a variation of 
the classical action affects the perturbations. It resulted in a first 
``dynamical'' relation, encoding information about the 
evolution of the system. In a second ``causal" relation, describing the 
ordering effects of time, we made use of the fact that any functional
comprises information as to when the corresponding perturbation takes place. 
This allowed us to incorporate the arrow of time into the group
by relying on the temporal order of perturbations. The group elements  
corresponding to the total effect of two successive 
perturbations, described by the sum of the underlying
functionals, are equal to the product of the 
group elements corresponding to the individual perturbations.
These two basic ingredients, together with a choice of Lagrangean, determine
the structure of the dynamical group. The remaining construction of a 
dynamical C*-algebra then follows from familiar mathematical arguments.

\medskip
It is a remarkable fact that our ``classical approach'', where no 
quantization rules were incorporated from the outset, reproduces 
the structure of quantum mechanics in every respect. As has become
clear by our analysis, the intrinsic non-commutativity of the dynamical 
algebra is a consequence of the arrow of time, which is 
incorporated in our setting. So one could argue that it
is this arrow which is at the origin of the ``quantization'' of  
the classical theory. The specific form of commutation relations then 
follows from the underlying classical dynamics. 

\medskip
We refrain from entering here into these interesting foundational questions.   
But let us mention that our novel approach may be useful also from
a pragmatic point of view. As already mentioned, it was 
discovered in \cite{BuFr} in the framework of 
quantum field theory in an attempt to complement the 
axiomatic framework with some dynamical input; there the  
construction of a dynamical C*-algebra for an interacting Bose field 
was accomplished. 
But that scheme may be applied to the ``quantization'' of 
quite arbitrary classical theories. What is needed is a classical 
configuration space which is invariant under the action of some
group (the loop functions in the present setting), a Lagrangean, 
and some causal order (fixed by time, lightcones in spacetime,  
\etc). One can then go ahead and construct a 
corresponding dynamical C*-algebra in analogy to the examples  
discussed in \cite{BuFr} and the present article. To determine
from it the structure of the resulting quantum theory is then
a matter of computation.


\vspace*{-1mm}
\section*{Acknowledgement}

\vspace*{-2mm}
DB gratefully acknowledges the hospitality extended to him by
Dorothea Bahns and the Mathematics Institute of the University of 
G\"ottingen.

\end{document}